\theoremstyle{thmstyleone}%
\newtheorem{theorem}{Theorem}
\theoremstyle{thmstyletwo}%
\theoremstyle{thmstylethree}%
\newtheorem{definition}{Definition}%
\newcommand{\ket}[1]{\vert #1 \rangle}
\newcommand{\braket}[2]{\langle #1 \vert #2 \rangle}
\newcommand{\eps}[0]{\varepsilon}
\definecolor{darkgreen}{RGB}{29, 148, 65}
\begin{document}

\title[Quantum-resistant classical-classical OWFs from
quantum-classical OWFs]{Creating quantum-resistant classical-classical OWFs from
quantum-classical OWFs}


\author*[1]{\fnm{Wei Zheng} \sur{Teo}}\email{wt2330@columbia.edu}

\author[2]{\fnm{Marco} \sur{Carmosino}}\email{mlc@ibm.com}

\author[1,2]{\fnm{Lior} \sur{Horesh}}\email{l.horesh@columbia.edu}

\affil*[1]{\orgdiv{Department of Computer Science}, \orgname{Columbia University}, \orgaddress{\street{500 W 120th St}, \city{New York}, \postcode{10027}, \state{NY}, \country{USA}}}

\affil[2]{\orgdiv{Mathematics of AI}, \orgname{IBM Research}, \orgaddress{\street{1101 Kitchawan Rd}, \city{Yorktown Heights}, \postcode{10598}, \state{NY}, \country{USA}}}


\abstract{One-way functions (OWF) are one of the most essential cryptographic primitives, the existence of which results in wide-ranging ramifications such as private-key encryption and proving $\P \neq \NP$ \citep{Impagliazzo89, Goldreich00}. These OWFs are often thought of as having classical input and output (i.e. binary strings), however, recent work proposes OWF constructions where the input and/or the output can be quantum \citep{Buhrman01, Gottesman01, Behera18, Shang20}. In this paper, we demonstrate that quantum-classical (i.e. quantum input, classical output) OWFs can be used to produce classical-classical (i.e. classical input, classical output) OWFs that retain the one-wayness property against any quantum polynomial adversary (i.e. quantum-resistant). We demonstrate this in two ways. Firstly, we propose a definition of quantum-classical OWFs and show that the existence of such a quantum-classical OWF would imply the existence of a classical-classical OWF. Secondly, we take a proposed quantum-classical OWF and demonstrate how to turn it into a classical-classical OWF. In summary, this paper showcases another possible route into proving the existence of classical-classical OWFs (assuming intermediate quantum computations are allowed) using a ``domain-shifting" technique between classical and quantum information, with the added bonus that such OWFs are also going to be quantum-resistant. }

\keywords{one-way functions, quantum computation, quantum cryptography} 

\maketitle

\section{Introduction}
One-way functions (OWFs) are one of the most important entities in computational complexity and cryptography. Proving the existence of classical-classical OWFs (i.e. classical input, classical output) would imply $\P \neq \NP$ as well as prove the existence of private-key cryptography. We first define the notion of a classical-classical OWF in the usual manner \citep{Goldreich00}: 

\begin{definition}
\label{def:owf}
Let $f : \{0,1\}^n \rightarrow \{0,1\}^{p(n)}$ be a deterministic and efficiently computable function, where $p(\cdot)$ is a polynomial. Then $f$ is a one-way function if
\begin{align*}
    \Pr_{x \sim \mathcal{U}_n}[f(A_p(1^n, f(x))) = f(x)] \leq \eps(n)
\end{align*}
where $\mathcal{U}_n$ is the uniform distribution over $n$-bit strings, $\eps(\cdot)$ is a negligible function, and $A_p$ is any probabilistic polynomial time (ppt) algorithm. 
\end{definition}
In other words, given a uniformly random $x$, we obtain $f(x)$ and give it to $A_p$ as input. $A_p$ should not be able to produce an output that is a valid preimage of $f(x)$ with non-negligible probability. Next, we define a quantum-resistant OWF, which is basically the same as Definition \ref{def:owf} but with a quantum polynomial adversary. 

\begin{definition}  
\label{def:qrowf}
Let $f : \{0,1\}^n \rightarrow \{0,1\}^{p(n)}$ be a deterministic and efficiently computable function, where $p(\cdot)$ is a polynomial. Then $f$ is a quantum-resistant one-way function if
\begin{align*}
    \Pr_{x \sim \mathcal{U}_n}[f(A_q(1^n, f(x))) = f(x)] \leq \eps(n)
\end{align*}
where $\mathcal{U}_n$ is the uniform distribution over $n$-bit strings, $\eps(\cdot)$ is a negligible function, and $A_q$ is any quantum polynomial time algorithm. 
\end{definition}

In the following sections, we explore two different ways to construct quantum-resistant OWFs using quantum-classical OWFs. Section 2 proposes a definition of quantum-classical OWFs which can then be easily extended to classical-classical OWFs. Section 3 makes use of the candidate quantum-classical OWF as proposed by \cite{Behera18} to construct an algorithm implementing a classical-classical OWF. The key idea behind these constructions is to base the security of the classical-classical OWFs on the quantum-classical OWFs used in their construction. Since the quantum-classical OWFs are, by definition, secure against quantum polynomial adversaries, then the resulting classical-classical OWFs must be secure against such adversaries as well.  

\section{Classical-classical OWF through composition}
Interactions between classical and quantum computations have the potential to introduce one-wayness. This is demonstrated by the existing constructions for both classical-quantum OWFs \citep{Gottesman01, Buhrman01} (i.e. classical input, quantum output) and quantum-classical OWFs \citep{Behera18} (i.e. quantum input, classical output). \cite{Shang20} proposed a quantum-quantum OWF by simply using a composition of a quantum-classical OWF and a classical-quantum OWF. The input quantum state to the quantum-quantum OWF is first given to the quantum-classical OWF as input to obtain a classical output, and this classical output is then given to the classical-quantum OWF to produce another quantum state as output. Inspired by this approach, we wanted to study whether the same approach can be used to produce a classical-classical OWF -- given a classical input, it is first given to a classical-quantum OWF as input to obtain a quantum state, and this state is then given to a quantum-classical OWF to produce the final classical output. We also investigate the question of whether we need both constituent functions to be one-way or if we can just have one of them be one-way. 

\subsection{Constructing a classical-classical OWF}
Let us consider the following construction of a classical-classical \emph{function}. Let $\mathcal{S} \subseteq \mathbb{C}^{2^m}$ be some finite set of $m$-qubit states. Let $f_1: \{0,1\}^n \rightarrow \mathcal{S}$ be a classical-quantum function. Let $f_2: \mathcal{S} \rightarrow \{0,1\}^{n'}$ be a quantum-classical function. Let $f: \{0,1\}^n \rightarrow \{0,1\}^{n'}$ be defined by $f(x) = f_2(f_1(x))$. For now, we consider only the case where $f_1$ and $f_2$ (and hence $f$) are \textbf{deterministic}. \\

Next, we want to consider what happens if $f_1$ and/or $f_2$ is a OWF. First, we require definitions of classical-quantum and quantum-classical OWFs. 

\begin{definition}
$f_1$ is a classical-quantum OWF if
\begin{align*}
    \Pr_{x \sim \mathcal{U}_n} [f_1(A_q(1^n, f_1(x))) =_k f_1(x)] \leq \eps(n)
\end{align*}
where again $\mathcal{U}_n$ is the uniform distribution over $n$-bit strings, $\eps(\cdot)$ is a negligible function, $A_q$ is any quantum polynomial time adversary, and $=_k$ refers to the outcome where $k$ independent repetitions of the swap test between the two states all pass and $k$ is polynomial sized. 
\end{definition}

It is important to note the use of the swap test in the definition. 
We would like to know whether the $m$-qubit states $f_1(A(f_1(x)))$ and $f_1(x)$ are equal. 
\cite{Gottesman01} proposed to use the \emph{swap test} as the test for equality, as there is no perfect equality test for two quantum states. 
In the swap test, we use a $ \ket{+} $ qubit as the control qubit in a controlled-swap operation between the two states that are to be compared (denoted $ \ket{\psi_1} $ and $ \ket{\psi_2} $). 
We then apply a Hadamard gate to the control qubit and measure it. 
The swap test is passed if we measure $\ket{0}$ and failed otherwise. 
If $\ket{\psi_1} = \ket{\psi_2}$, we will measure $\ket{0}$ with probability 1. 
However, if $\ket{\psi_1} \neq \ket{\psi_2}$ and $\lvert \braket{\psi_1}{\psi_2} \rvert \leq \delta$, then the swap test passes with probability at most $(1+\delta^2)/2$, i.e. the swap test, serving as an equality test, has a probability of failure if the states are unequal. 
With this in mind, it would be ideal for the outputs of $f_1$ to be close to orthogonal, so that the swap test has a larger probability of failing if the states are unequal. 
Using the swap test as the equality test also means that we require a large enough number of copies of the challenge state (say, $k=\omega(\log n)$ copies), such that we can perform enough independent swap tests to eventually end up with a negligible probability that unequal states pass all the swap tests. 
\\

Next, we define a quantum-classical OWF. 

\begin{definition}\label{qcowf-def}

$f_2$ is a quantum-classical OWF if 
\begin{align*}
    \Pr_{\ket{\psi}\sim\mathcal{U}(\mathcal{S})}[f_2(A_q(1^m, f_2(\ket{\psi}))) = f_2(\ket{\psi})] \leq \eps(m)
\end{align*}

where $\mathcal{U}(\mathcal{S})$ represents the uniform distribution over the set $\mathcal{S}$, $\eps(\cdot)$ is a negligible function, and $A_q$ is any quantum polynomial time adversary. Furthermore, for any $\ket{\phi} \not \in \mathcal{S}$, the distribution of $f_2(\ket{\phi})$ should not have any output occurring with non-negligible probability.  
\end{definition}

While we do not need to define an equality test like in the classical-quantum case, we do need to consider the possibility of $A_q$ producing a state not in $\mathcal{S}$. $f_2$ was defined to be deterministic only for states in the domain $\mathcal{S}$. Therefore, on input states outside of $\mathcal{S}$, there is no longer a guarantee that the output will be deterministic. This opens up the possibility of $A_q$ producing a state that is not in $\mathcal{S}$, but when given as input to $f_2$, results in a non-negligible probability of outputting a certain string. When this happens, we can consider this as a kind of `cheating', since $A_q$ did not actually find a preimage in the domain $\mathcal{S}$, but just managed to find some other state that makes $f_2$ output the desired string with non-negligible probability. One way to overcome this would be to simply define a successful inversion where $A_q$ must produce a state in $\mathcal{S}$. However, the definition we adopted was to simply mandate the function to not produce any output with non-negligible probability when given any state not in $\mathcal{S}$, demotivating $A_q$ from `cheating'. Note that there is potential for an extensive discussion on the best way to resolve the issue with `cheating'. For example, we can imagine more relaxed versions of the `no cheating' property, e.g. if a state close enough to the preimage but not in $\mathcal{S}$ also produces the output string with non-negligible probability, we might want to consider that as a successful inversion as well and thus allow such a behavior from $f_2$, while defining it to be difficult to find such a state to retain the one-way property. However, regardless of the method of choice we use to prevent the adversary from `cheating', if we were to consider the use of such a function for the purposes of building a classical-classical OWF as in Theorem \ref{thm:qctocc}, the adversary in the proof of Theorem \ref{thm:qctocc} does not cheat, and hence we can consider definitions without the `no cheating' property (i.e. it is only difficult to find a preimage in $\mathcal{S}$ but it may not be difficult to find a preimage outside of $\mathcal{S}$). The question about the feasibility of alternative quantum-classical OWF definitions and their potential cryptographic applications will be left as an open question for now. The question of whether quantum-classical functions exhibiting the `no cheating' property as defined exists is also left unanswered for now. \\

With these definitions, we can now prove our first theorem. 

\begin{theorem}
If a quantum-classical OWF as defined in Definition \ref{qcowf-def} exists, then a quantum-resistant classical-classical OWF exists.
\label{thm:qctocc}
\end{theorem}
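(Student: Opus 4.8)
The plan is to turn $f_2$ into a classical-classical function by pre-composing it with an efficiently computable ``index-to-state'' map, and then to inherit one-wayness from $f_2$ by a one-line reduction. First I would fix an efficiently computable classical-quantum map $f_1 : \{0,1\}^n \to \mathcal{S}$ that prepares the members of $\mathcal{S}$ from their labels: enumerate $\mathcal{S} = \{\ket{\psi_0}, \ket{\psi_1}, \dots\}$ and set $f_1(x) = \ket{\psi_x}$. Since the whole framework presumes that the challenger for $f_2$ can sample $\ket{\psi}\sim\mathcal{U}(\mathcal{S})$ efficiently, it is harmless to assume that the states of $\mathcal{S}$ are efficiently preparable from short labels and, after restricting $\mathcal{S}$ to a subset of size $2^n$ if necessary (a subset of non-negligible density inherits the one-wayness of $f_2$), that $f_1$ is a bijection onto $\mathcal{S}$ with $n$ polynomially related to the qubit count $m$. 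Crucially, $f_1$ need not itself be one-way. Now define $f : \{0,1\}^n \to \{0,1\}^{n'}$ by $f(x) = f_2(f_1(x))$. Because $f_1(x) \in \mathcal{S}$ for every $x$ and $f_2$ is deterministic on $\mathcal{S}$, $f$ is a well-defined deterministic function; it is computed by a polynomial-time quantum algorithm with classical input and output (prepare $f_1(x)$, then run $f_2$), which is exactly the notion of efficient computability permitted for the classical-classical OWFs of this paper. Note also that $f(\mathcal{U}_n)$ and $f_2(\mathcal{U}(\mathcal{S}))$ are identically distributed, since $f_1$ pushes $\mathcal{U}_n$ forward to $\mathcal{U}(\mathcal{S})$.

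Next I would establish quantum-resistant one-wayness of $f$ by contraposition. Suppose a quantum polynomial-time adversary $A_q$ satisfies $\Pr_{x\sim\mathcal{U}_n}[f(A_q(1^n, f(x))) = f(x)] = \mu(n)$ for some non-negligible $\mu$. I construct a quantum polynomial-time adversary $B_q$ against $f_2$ as follows: on input $(1^m, y)$ with $y = f_2(\ket{\psi})$ and $\ket{\psi}\sim\mathcal{U}(\mathcal{S})$, it computes $n$ from $m$, runs $x' \leftarrow A_q(1^n, y)$, prepares the state $f_1(x')$, and outputs it. Since $f_1$ is a bijection onto $\mathcal{S}$, the challenge $y$ that $B_q$ receives is distributed exactly as $f(x)$ for $x\sim\mathcal{U}_n$; and whenever $A_q$ succeeds, i.e. $f(x') = y$, we have $f_2(f_1(x')) = y = f_2(\ket{\psi})$ with $f_1(x') \in \mathcal{S}$, so $B_q$ has produced a genuine preimage of $y$ inside $\mathcal{S}$. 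Hence $\Pr_{\ket{\psi}\sim\mathcal{U}(\mathcal{S})}[f_2(B_q(1^m, f_2(\ket{\psi}))) = f_2(\ket{\psi})] \ge \mu(n)$, which is non-negligible in $m$ because $n$ and $m$ are polynomially related. This contradicts the fact that $f_2$ is a quantum-classical OWF in the sense of Definition \ref{qcowf-def}. It is worth noting that $B_q$ never ``cheats'' in the sense discussed after Definition \ref{qcowf-def}: its output always lies in $\mathcal{S}$, so the ``no cheating'' clause of that definition is not used and could be dropped, exactly as anticipated in the text.

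I expect the only real friction to be the bookkeeping around the domain and the security parameters: justifying that one may take $f_1$ to be an efficiently computable bijection onto $\mathcal{S}$ (padding or restricting $\mathcal{S}$, and arguing the restriction keeps $f_2$ one-way) and that $n$ and $m$ are polynomially related, so that both running times and inversion advantages translate cleanly between the two security parameters. The cryptographic heart of the argument is essentially trivial once the construction is set up, because a classical preimage $x'$ of $f$ is automatically converted, via $f_1$, into a quantum preimage $f_1(x') \in \mathcal{S}$ of $f_2$ with exactly the same success probability.
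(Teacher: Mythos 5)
Your proposal is correct and follows essentially the same route as the paper: define $f = f_2 \circ f_1$ and reduce inverting $f_2$ to inverting $f$ by mapping a recovered classical preimage $x'$ through $f_1$ into $\mathcal{S}$, observing that no ``cheating'' occurs. You are in fact somewhat more careful than the paper, which leaves $f_1$ unspecified and does not explicitly argue that $f_1$ pushes $\mathcal{U}_n$ forward to $\mathcal{U}(\mathcal{S})$ so that the challenge distributions match; your bijective index-to-state map fills that gap cleanly.
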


\begin{proof} Recall the construction of $f$ shown above, 
\begin{align*}
    f(x) = f_2(f_1(x))
\end{align*}
where the function signatures are 
\begin{align*}
    f_1 &: \{0,1\}^n \rightarrow \mathcal{S} \\
    f_2 &: \mathcal{S} \rightarrow \{0,1\}^{n'}\\
    f &: \{0,1\}^n \rightarrow \{0,1\}^{n'}
\end{align*}
and $\mathcal{S} \subseteq \mathbb{C}^{2^m}$ is some finite set of $m$-qubit states. 
Let $f_2$ be a quantum-classical OWF. Then $f$ must be a quantum-resistant classical-classical OWF. Suppose $f$ is not a quantum-resistant OWF. Then there exists an adversary $A$ that, when given some $y = f(x)$, is able to find $x'$ with non-negligible probability in quantum polynomial time such that $f(x') = y$. We can then make use of $A$ to construct an algorithm $A'$ that inverts $f_2$ with non-negligible probability. The algorithm for $A'$ is simply to first execute $A$ and obtain a string $x'$ from $A$. With non-negligible probability, we would have $f(x') = y$. Next, we calculate $\ket{\psi} = f_1(x')$. Then $\ket{\psi}$ must be a valid preimage of $y$ with respect to $f_2$, since $f(x') = f_2(f_1(x')) = f_2(\ket{\psi}) = y$. This means that $A'$ has successfully inverted $f_2$ with probability equal to the success probability of $A$ (i.e. non-negligible). Furthermore, $A'$ did not `cheat' and has found a valid preimage of $f_2$ in the domain $\mathcal{S}$, since $\mathcal{S}$ is also the range of $f_1$. We therefore arrive at a contradiction if we assumed that $f_2$ is supposed to be a quantum-classical OWF. Note that we require the assumption that $n'$ should be at least $\omega(\log n)$. If $n' = O(\log n)$, then random guessing can give us a non-negligible probability of inverting $f$, and thus $f$ is not a OWF anymore. 
\end{proof}

Applying a similar construction as the proof above, we do not arrive at a contradiction if we instead set $f_1$ to be a OWF and not $f_2$. Suppose $A$, when given $y = f(x)$, can find an inverse $x'$ such that $f(x') = y$. Let $\ket{\psi} = f_1(x)$, $\ket{\psi'} = f_1(x')$. Clearly $\ket{\psi}$ and $\ket{\psi'}$ can be different, i.e. $x'$ is not necessarily a preimage of $\ket{\psi}$ with respect to $f_1$. Therefore, even if we assume $f$ is not one-way, we do not contradict $f_1$ being a OWF. Note that we have only ruled out one way of constructing classical-classical OWFs from classical-quantum OWFs. Whether classical-quantum OWFs can lead to classical-classical OWFs remains an open question.\\

Next, we aim to explore whether a quantum-classical OWF following the definition in Definition \ref{qcowf-def} can exist. While we do not show with certainty that such a function exists, we explore a non-exhaustive list of ways that do not result in a quantum-classical OWF. 

\subsection{Constructing a quantum-classical OWF}
Indeed, proving the existence of a quantum-classical OWF as defined in Definition \ref{qcowf-def} would almost certainly give us a quantum-resistant classical-classical OWF due to Theorem \ref{thm:qctocc}, provided a suitable and efficient classical-quantum function can be found. For now, we attempt to provide a non-exhaustive list of methods that do not work in giving us a quantum-classical OWF. The methods listed here are generally simple methods which only apply a unitary operation and make some kind of measurement at the end of the unitary operation. 

\subsubsection{How not to construct a quantum-classical OWF}
\begin{enumerate}
    \item \textit{Passing a state through a quantum circuit and deterministically measuring all qubits to produce the classical output}. Using the classical output, we can reconstruct the final state before measurement, and pass this final state through the reverse circuit to obtain the original input state. Therefore such quantum-classical functions are not one-way. 
    \item \textit{Same as 1, but only measuring some of the qubits, and it is easy to fill in for the unknown qubits at the final state}. Suppose we want to use the same method as in 1 to send the final state through the reverse circuit to obtain the original input state. If we only measured (and thus know) some of the qubits at the final state, we will need to find substitutes for the unknown qubits. Now, if it is not difficult to find these substitute qubits, such that the computed input state is in $\mathcal{S}$, then the function is invertible. Note that we observe an interesting circular argument in this construction --- given a circuit that ends off by measuring a portion of the qubits, if this circuit is to implement a quantum-classical OWF, then it must be difficult to find suitable substitute qubits for the unknown qubits at the final state. Observe that the problem of finding the substitute qubits is very similar to the quantum-classical OWF problem statement itself --- given classical information, it is difficult to find a certain suitable state. 
    \item \textit{Have $\mathcal{S} = \mathbb{C}^{2^m}$, and the classical output is obtained by measurement on some or all of the qubits at the end of all quantum computation}. Using the same idea as above (i.e. reconstructing the final state before measurement and sending it through the reverse circuit), we note that if $\mathcal{S} = \mathbb{C}^{2^m}$, then even if there are unknown qubits in the final state, using any kind of qubits for the substitute qubits will still allow the computed input state to be in $\mathcal{S}$. Note that this still holds true even if the circuit implements a probabilistic function, although so far we have mainly assumed that quantum-classical OWFs are deterministic. 
\end{enumerate}

Note that methods 1 and 2 above also do not satisfy the `no cheating' property as described in Definition \ref{qcowf-def}. This is because for a state $\ket{\psi'}$ not in $\mathcal{S}$ but close to some other state $\ket{\psi}$ in $\mathcal{S}$, we can still have a high (non-negligible) probability of producing the same measurements that $\ket{\psi}$ produces, simply because we are only executing a simple unitary operation $U$, and measuring $U\ket{\psi'}$ has a high probability of projecting to $U\ket{\psi}$ if $\ket{\psi}$ and $\ket{\psi'}$ are close.

\section{Classical-classical OWF from candidate quantum-classical OWF} 
In this section, we will mainly be building on the candidate quantum-classical OWF as proposed by \cite{Behera18} to create a classical-classical quantum-resistant OWF. We first briefly go through the candidate quantum-classical OWF, before describing how to turn it into a classical-classical OWF. 

\subsection{Candidate quantum-classical OWF}

\cite{Behera18} proposed a quantum-classical OWF that is different from Definition \ref{qcowf-def}. In this approach, the OWF itself is defined dynamically based on the input state. To formally define this OWF, we have to first introduce GCH states. An $n$-qubit GCH state is a tensor product (in any order) of qubits in either the standard basis $\mathcal{C} = \{\ket{0}, \ket{1}\}$, the Hadamard basis $\mathcal{H} = \{\ket{+}, \ket{-}\}$, or GHZ states containing 2 to $n$ qubits $\mathcal{G} = \bigcup_{j=2}^n \mathcal{B}_j$, where $\mathcal{B}_j = \{\frac{1}{\sqrt{2}}(\ket{x} + \ket{\bar{x}}) \mid x \in \{0,1\}^j\}$. Now we can define the function signature of the candidate quantum-classical OWF: \[\mathcal{F}: \{\ket{\psi^{(n)}}_{GCH}\} \rightarrow \{0,1\}^L,\] where $\{\ket{\psi^{(n)}}_{GCH}\}$ refers to the set of $n$-qubit GCH states. 

\begin{theorem}
Given an input state $\ket{\psi} \in \{\ket{\psi^{(n)}}_{GCH}\}$, we can define a function $\mathcal{F}$ based on $\ket{\psi}$ that is one-way, i.e. for any efficient adversary $A$ and public parameters $P$ (which includes $1^n$, the classical output $\mathcal{F}\ket{\psi}$ and the implementation of $\mathcal{F}$),  
\begin{align*}
    \Pr(\mathcal{F}\ket{\phi} = \mathcal{F}\ket{\psi} \mid A(P) = \ket{\phi}) \leq \eps(n)
\end{align*}
where $\eps(\cdot)$ is a negligible function. 
\end{theorem}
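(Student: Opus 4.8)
The plan is to argue the contrapositive: reduce the task of inverting $\mathcal{F}$ to the task of recovering a piece of $\ket{\psi}$ that the public data $P$ does not determine, and then rule that out with a guessing bound. Before anything else I would pin down, from the construction of $\mathcal{F}$ in \cite{Behera18}, exactly how $\mathcal{F}$ acts on a GCH state and which data about $\ket{\psi}$ ends up in $P$. Then, supposing an efficient quantum $A$ exists that on input $P$ outputs a state $\ket{\phi}$ with $\mathcal{F}\ket{\phi} = \mathcal{F}\ket{\psi}$ with non-negligible probability $\mu$ (over the GCH input $\ket{\psi}$, over $A$'s coins, and over the randomness used in evaluating $\mathcal{F}$ on $\ket{\phi}$), I aim to contradict $\mu$ being non-negligible.

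The bookkeeping step separates the structure of $\ket{\psi}$ that $P$ reveals from the structure it hides. An $n$-qubit GCH state is a tensor product of single-qubit $\mathcal{C}$- or $\mathcal{H}$-factors and of GHZ blocks drawn from $\bigcup_{j} \mathcal{B}_{j}$; I will call the data ``which qubits form which block, and which basis family each single-qubit factor uses'' the \emph{type} of $\ket{\psi}$, and the remaining labels --- the bit on each $\mathcal{C}$- and $\mathcal{H}$-factor, and the representative string on each GHZ block --- the \emph{payload}. Since $\mathcal{F}$ is built from $\ket{\psi}$ and its description is part of $P$, the public data reveals at most the type together with the output string $\mathcal{F}\ket{\psi}$. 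The claim to establish here is that there remain $t = t(n) = \omega(\log n)$ bits of the payload --- write their joint value as $S$ --- that are independent of $P$ (statistically, or at least computationally). If no such $t$ exists the theorem is false for exactly the reason already recorded for Theorem \ref{thm:qctocc}: with only $O(\log n)$ genuinely hidden bits, $A$ inverts by guessing.

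Given the bookkeeping step, the reduction is short. By construction the output string of $\mathcal{F}$ directly reveals the coordinates belonging to $S$, so the event $\mathcal{F}\ket{\phi} = \mathcal{F}\ket{\psi}$ forces the measurement inside the evaluation $\mathcal{F}\ket{\phi}$ to return $S$ on those coordinates. But $\ket{\phi}$ is produced from $P$ and $A$'s coins alone, and $S$ is independent of $P$, so the string $\mathcal{F}\ket{\phi}$ is independent of $S$ and agrees with it on all $t$ coordinates with probability at most $2^{-t}$; hence $\mu$ is at most $2^{-t}$ plus a negligible term, which is negligible since $t = \omega(\log n)$ --- the desired contradiction. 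Keeping this last estimate at the level of output strings rather than of $\ket{\phi}$ is deliberate: independence of $\mathcal{F}\ket{\phi}$ from $S$ holds whatever $\ket{\phi}$ is, so the bound also covers a ``cheating'' $A$ that returns some $\ket{\phi} \notin \{\ket{\psi^{(n)}}_{GCH}\}$ of the kind flagged after Definition \ref{qcowf-def}; alternatively, if one fixes the success criterion so that $A$ must output a genuine GCH state, the same reduction applies and the bookkeeping is if anything easier, since then $\ket{\phi}$ must carry the correct type and $S$ is exactly the last missing piece.

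I expect the bookkeeping step to be the real obstacle, since it is a statement about the particular $\mathcal{F}$ of \cite{Behera18} rather than a generic fact: one has to check that the published implementation of $\mathcal{F}$ --- its basis choices, block boundaries, and any ancilla routing or relabeling --- leaks only the type and not any of the $t$ payload bits, and that $t = \omega(\log n)$ for the instances one cares about; if some payload bits do leak, the entropy count must be redone over the surviving bits and $\omega(\log n)$ of them shown to persist. (Morally the hidden bits should be precisely those that the measurement inside $\mathcal{F}$ is arranged to randomize --- the computational-basis outcomes on the $\mathcal{H}$-factors and on the GHZ blocks --- but the argument above does not hinge on which bits they are.) A secondary, expository obstacle is that the theorem does not specify the ensemble from which $\ket{\psi}$ is drawn; the reading that makes the guessing bound literal is that $\ket{\psi}$ has an arbitrary (or fixed) type and a payload uniform among GCH states with $\omega(\log n)$ hidden bits, and I would fix that convention at the start.
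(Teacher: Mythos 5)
First, a point of orientation: the paper does not actually prove this theorem. It is a restatement of the claim of \cite{Behera18}, accompanied only by a description of the construction and the remark that random-guessing strategies succeed with negligible probability; the Discussion section explicitly concedes that this is ``far from a general proof of security,'' and the Future Work section lists the true one-wayness of this candidate as an open question. So you are attempting something the paper deliberately does not attempt, and a complete argument here would be a new contribution rather than a reconstruction of anything in the text.

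That said, your argument has a genuine logical flaw that is independent of the ``bookkeeping'' you flag as the main obstacle. You require the hidden payload $S$ to satisfy two incompatible conditions: (i) $S$ is independent of $P$, and (ii) the output string of $\mathcal{F}$ ``directly reveals the coordinates belonging to $S$,'' so that the event $\mathcal{F}\ket{\phi} = \mathcal{F}\ket{\psi}$ forces the evaluation on $\ket{\phi}$ to reproduce $S$. But $\mathcal{F}\ket{\psi}$ is itself a component of $P$; if it reveals $S$, then $S$ is a function of $P$, condition (i) fails, and the adversary can simply read $S$ off the challenge rather than guess it. Conversely, if $S$ really is independent of $P$, then the public output cannot depend on $S$, so the matching event places no constraint on the $S$-coordinates of $\ket{\phi}$ and the $2^{-t}$ guessing bound never enters. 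This is not a gap that more careful leakage accounting can close: no purely information-theoretic ``hidden entropy of the preimage'' argument can establish one-wayness of a function whose output is handed to the adversary, because the adversary's task is to find \emph{any} state mapping to the known output string, not to recover the undisclosed part of the particular preimage $\ket{\psi}$. Whatever hardness the construction of \cite{Behera18} possesses must come from the computational difficulty of the constraint-satisfaction problem of exhibiting a GCH state simultaneously consistent with the $n$ published circuits and their $n$ outcome strings, and that is precisely the piece that neither your argument nor the cited work supplies.
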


Here we briefly explain the construction of such an $\mathcal{F}$ for a given input state. 
\begin{itemize}
    \item We generate a circuit by adding several layers of parallel CNOT gates (called OWF gate operations) which are randomly placed while following a set of rules. CNOT gates can only be used on two \emph{compatible} qubits, such that if a GCH state is given to an OWF gate operation as an input, then the output is also a GCH state. 
    \item In the final OWF gate operation, we mark one qubit per CNOT gate using some fixed rules based on the type of qubit at each position, and choose $(\frac{n}{2}-1)$ of the marked qubits for measurement. The rules ensure that the marked qubits will either be $\mathcal{C}$ or $\mathcal{H}$ qubits.  
    \item Finally, we perform a POVM measurement on these marked qubits (hence not collapsing the state with measurement) to obtain the classical output. The classical output describes the state of each of the marked qubits (i.e. one of $\ket{0}, \ket{1}, \ket{+}, \ket{-}$). 
    \item We repeat the whole circuit generation procedure for $n$ times (which is possible since the measurement did not collapse the state, hence we can apply the reverse circuit to obtain the original input), i.e. we have $n$ such randomly generated circuits to produce $n$ sets of classical outputs. The final output of $\mathcal{F} \ket{\psi}$ is the concatenation of all $n$ sets of classical outputs. 
    \item The user is given the classical output as well as all $n$ circuits used. It will be difficult for the user to find an input GCH state by various methods of random guessing that produces the same output, except with negligible probability. 
\end{itemize} 

In short, the function involves $n$ repetitions of passing the input state through a circuit, measuring, and passing the state through the reverse circuit, where $n$ is the number of input qubits. In fact, we can imagine this whole process as a \emph{single circuit with intermediate non-destructive measurements}, and therefore future mentions of the term `circuit' will be referring to the one implementing the entire $\mathcal{F}$ instead of just one of the circuits used in some iteration of the algorithm. 

\subsection{Constructing a classical-classical OWF from the proposed quantum-classical OWF}

The first thing to consider is to find a way to encode the input GCH states in binary. Fortunately, this can be easily done. One way to do this is to encode the circuit producing a GCH state in binary. Figure \ref{fig:sampleencoding} shows a possible circuit for encoding a GCH state. We can encode such a circuit in the following manner: 
\begin{itemize}
    \item First $n$ bits: describes the initial qubits ($\ket{0}$ or $\ket{1}$). 
    \item Next $n$ bits: describes which positions have the Hadamard gate applied. 
    \item Next $O(n \log n)$ bits: describes all the positions of CNOT gates. We can enforce certain rules for valid classical encodings at this part. Firstly, the control qubit of each CNOT gate must be on a qubit that is currently a $\ket{+}$. Secondly, for a set of qubits that are supposed to be entangled in the same GHZ state, all the CNOT gates used for entangling all of these qubits must have their control qubit be the qubit with the smallest-numbered position. For example, if qubits 6, 7, 8 are to be entangled, we must have two CNOT gates with (control, target) qubits be (6, 7) and (6, 8). Enforcing these rules will give us a bijective mapping from the set of valid encodings to the GCH states. Lastly, we use $O(n \log n)$ bits because there are at most $n$ CNOT gates since each qubit can only be the target qubit of each CNOT gate once, and the location of each CNOT gate can be described by two $\lceil \log n \rceil$ strings for the position of the two qubits. More precisely, we use at most $2n \lceil \log n \rceil = O(n \log n)$ bits for this section. 
\end{itemize}

\begin{figure}[h]
    \centering
    \includegraphics[scale=0.4]{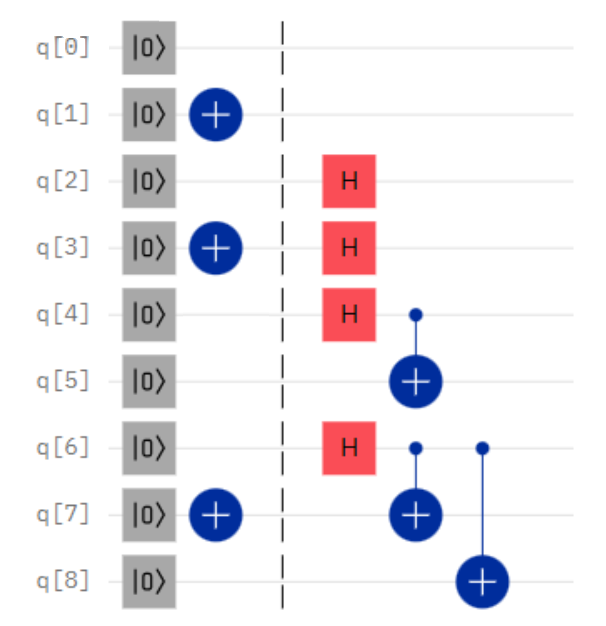}
    \caption{Circuit producing the GCH state $\ket{0}\ket{1}\ket{+}\ket{-}(\ket{00}+\ket{11})(\ket{010}+\ket{101})$ (constant factors ignored)}
    \label{fig:sampleencoding}
\end{figure}

The next obstacle to overcome when converting the quantum-classical OWF of \cite{Behera18} to a deterministic classical-classical OWF is that, given a  function $\mathcal{F}$ sampled according to the proposed algorithm, this function produces \textbf{deterministic output on states with the same basis as the chosen input state} but may produce \textbf{non-deterministic output on states with a basis different from the chosen input state}. At this stage, we make the following key observations: 
\begin{itemize}
    \item A circuit (where `circuit' refers to the entire implementation of $\mathcal{F}$ as explained in the previous subsection) sampled based on an input state $\ket{\psi}$ has deterministic output on all states with the same basis as $\ket{\psi}$. Therefore, if we are able to use a \emph{circuit family} instead of just a single circuit, where each circuit in the circuit family caters to states of a certain basis, then the function implemented by the whole circuit family would be entirely deterministic. 
    \item Sampling the circuit is efficient since its size complexity is $O(n^3)$ as proved by \cite{Behera18}, and so we only have to perform $O(n^3)$ samplings. This means that at runtime, it is efficient to generate the appropriate circuit from the circuit family based on the basis of the input state. Clearly it is also efficient to know the basis of the input state since we need to generate the actual quantum state, and knowledge of the state automatically gives us knowledge of the basis. 
\end{itemize}
Using these observations, we come up with the following algorithm. We require the use of a seeded pseudorandom number generator (PRNG) that we denote $\mathcal{P}_s$, where $s$ is the seed. 

\hypertarget{algo}{}
\begin{algorithm}[H]
\textbf{Input}: A valid encoding $x$ of a GCH state \\
$\ket{\psi} \leftarrow $ build circuit from $x$ and generate GCH state; \\
$b \leftarrow $ basis of $\ket{\psi}$; \\
$\mathcal{F}_b \leftarrow $ execute algorithm from \cite{Behera18} to sample circuit using $\mathcal{P}_b$ as source of randomness \hspace{25pt} \textcolor{darkgreen}{// i.e. all states with the same basis $b$ produce the same circuit} \\
$y \leftarrow \mathcal{F}_b \ket{\psi}$ \hspace{80pt} \textcolor{darkgreen}{// classical output from quantum-classical OWF} \\
$y' \leftarrow y + $ encoding of $\mathcal{F}_b$ \\
\textbf{Output}: $y'$
\caption{Classical-classical OWF from candidate quantum-classical OWF}
\end{algorithm}

The only part of Algorithm \hyperlink{algo}{1} that has not been explained yet is the addition of an encoding of $\mathcal{F}_b$ to the output $y$. The adversary considered by \cite{Behera18} does indeed not only have access to the classical output, but also the implementation of the circuit $\mathcal{F}$. Therefore, by including information about the circuit in $y'$, we ensure that our adversary has access to the same information as the one considered by \cite{Behera18}, allowing us to simply adopt their security guarantee. Another benefit of appending the encoding of $\mathcal{F}_b$ to the output is to prevent collisions. Since states with different bases will produce different circuits, the encoding of these circuits will also be different, which means the only possible collisions must come from states of the same basis. However, these collisions should occur on a low enough frequency such that the one-wayness of the quantum-classical function is not affected, otherwise this would contradict the results of \cite{Behera18}. Lastly, we do not cover in detail how to encode the circuit $\mathcal{F}_b$, since it should be clear that such a polynomial-sized circuit can be encoded efficiently. One possible method of encoding a constituent circuit (instead of the whole circuit representing $\mathcal{F}$; these encodings of each consituent circuit can then be concatenated eventually) is to simply produce a list of CNOT gate positions ordered first by layer, then by lexicographic order according to the control and target positions of the CNOT gates in each layer. The marked and measured qubits at the end of the circuit can each be encoded using $n$ bits. Using this method of encoding will give us a unique encoding for each circuit. 
In short, Algorithm \hyperlink{algo}{1} creates the input state, generates the appropriate circuit, executes the circuit to obtain the classical outputs, then returns the classical outputs along with an encoding of the circuit. If Algorithm \hyperlink{algo}{1} was not one-way, we can clearly see that we will be able to violate the one-wayness of the candidate quantum-classical OWF and contradict the results of \cite{Behera18}. This leads us to our final theorem: 

\begin{theorem}
    If the quantum-classical OWF by \cite{Behera18} is secure against a quantum polynomial adversary, then the resulting classical-classical function from Algorithm \hyperlink{algo}{1} is also a quantum-resistant OWF. 
\end{theorem}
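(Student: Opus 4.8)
The plan is to prove this by a reduction, in the same spirit as Theorem~\ref{thm:qctocc}: assume the classical--classical function $f$ produced by Algorithm~\hyperlink{algo}{1} is \emph{not} a quantum-resistant OWF, and derive a contradiction with the one-wayness of the candidate quantum--classical OWF of \cite{Behera18}. Write $N = O(n\log n)$ for the length of a valid encoding of an $n$-qubit GCH state, and suppose there is a quantum polynomial-time $A$ that, on input $y' = f(x)$ with $x\sim\mathcal{U}_N$, outputs $x'$ with $f(x')=y'$ with non-negligible probability $\mu(N)$. First I would check that $f$ is an admissible candidate at all: it is deterministic, because (by the key property of \cite{Behera18}) the final POVM readout is deterministic on GCH states whose basis matches the one the circuit $\mathcal{F}_b$ was sampled for; it is computable in quantum polynomial time, because building the $O(n^3)$-size circuit $\mathcal{F}_b$ and the $n$-qubit state, running it with non-destructive measurements, and encoding $\mathcal{F}_b$ are each polynomial; and by restricting the domain to valid encodings (or padding the encoding so that every $N$-bit string decodes to \emph{some} GCH state) every preimage that $A$ can return decodes to a legitimate GCH state, so there is no way to ``invert'' outside the intended domain.

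The core step is to build from $A$ an adversary $A'$ against the instance $\mathcal{F}_b$ of \cite{Behera18} determined by the (secret) basis $b$ of its challenge state $\ket{\psi}$. The reason this works is precisely why Algorithm~\hyperlink{algo}{1} appends $\mathrm{enc}(\mathcal{F}_b)$ to its output: the public parameters $P$ of the instance $\mathcal{F}_b$ consist of $1^n$, the classical output $\mathcal{F}_b\ket{\psi}=y$, and the implementation of $\mathcal{F}_b$, which is exactly the data making up $y' = y\,\|\,\mathrm{enc}(\mathcal{F}_b)$. So $A'$, given $P$, reconstructs $y'$, runs $A(y')$ to get $x'$, executes the first line of Algorithm~\hyperlink{algo}{1} on $x'$ to obtain a GCH state $\ket{\phi}$, and outputs $\ket{\phi}$; this is efficient. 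For correctness, whenever $A$ succeeds we have $f(x')=y'$; running Algorithm~\hyperlink{algo}{1} on $x'$ reads off $\ket{\phi}$ with some basis $b'$, samples $\mathcal{F}_{b'}$ from $\mathcal{P}_{b'}$, and outputs $\mathcal{F}_{b'}\ket{\phi}\,\|\,\mathrm{enc}(\mathcal{F}_{b'})$. Matching the circuit component forces $\mathrm{enc}(\mathcal{F}_{b'})=\mathrm{enc}(\mathcal{F}_b)$, hence $\mathcal{F}_{b'}=\mathcal{F}_b$ by injectivity of the encoding, hence $b'=b$ because distinct bases are sent to distinct circuits; so $\ket{\phi}$ lies in the same basis as $\ket{\psi}$, on which $\mathcal{F}_b$ is deterministic, and matching the first component gives $\mathcal{F}_b\ket{\phi}=y=\mathcal{F}_b\ket{\psi}$. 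Thus $A'$ inverts $\mathcal{F}_b$ with probability at least $\mu(N)$ minus the (negligible) probability of the bad events discussed below, which is still non-negligible in $n$ since $N$ is polynomially bounded in $n$ --- contradicting \cite{Behera18}.

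The step I expect to be the main obstacle is justifying that $\mathcal{F}_b$, whose circuit is drawn from the \emph{pseudorandom} stream $\mathcal{P}_b$ rather than from true randomness, still inherits the one-wayness guarantee of \cite{Behera18}, and relatedly that the map $b\mapsto\mathcal{F}_b$ really is injective. The clean route is to posit that $\mathcal{P}$ is a PRNG secure against quantum polynomial adversaries and run a hybrid argument: if $A'$ inverted $\mathcal{F}_b$ with non-negligible probability when the circuit is sampled from $\mathcal{P}_b$ but only negligibly when it is sampled from true randomness (which is what \cite{Behera18} provides), then $A'$ composed with the efficient sample-and-evaluate procedure would distinguish pseudorandom from truly random coins. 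The same indistinguishability shows that truly random coins almost never produce the same circuit for two of the polynomially many bases in play, so collisions of $b\mapsto\mathcal{F}_b$ (and hence of $f$) contribute only a negligible lower-order term to the analysis above. If one instead reads the theorem of \cite{Behera18} as a worst-case statement over every circuit its sampler can output, this PRNG layer is unnecessary and the reduction is immediate; I would present the result under whichever reading is intended, flagging the quantum-resistant-PRNG assumption explicitly where it is used.
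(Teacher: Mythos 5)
Your reduction is the same one the paper gives: assume an inverter $A$ for the function computed by Algorithm \hyperlink{algo}{1}, have the adversary against the candidate quantum-classical OWF re-encode its public parameters as $y' = y \,\|\, \mathrm{enc}(\mathcal{F}_b)$, run $A$ on $y'$, and decode the returned classical preimage into a GCH state. Where you go beyond the paper's proof sketch is in two places, and both additions are substantive. First, you spell out why a colliding preimage still counts as an inversion: agreement on the appended circuit encoding forces the recovered state into the basis $b$ on which $\mathcal{F}_b$ is deterministic, so agreement on the first component really does give $\mathcal{F}_b\ket{\phi} = \mathcal{F}_b\ket{\psi}$; the paper asserts the relevant injectivity claims ("states with different bases will produce different circuits") without argument. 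Second, and more importantly, you flag the distribution mismatch: Algorithm \hyperlink{algo}{1} samples $\mathcal{F}_b$ deterministically from the pseudorandom stream $\mathcal{P}_b$, whereas the security guarantee of \cite{Behera18} is for circuits sampled with fresh randomness, so the string $y'$ that the reduction hands to $A$ need not be distributed like (or even lie in) the image of $f$ on uniform inputs --- and $A$ is only guaranteed to succeed on that distribution. The paper's sketch silently identifies these two distributions; your hybrid argument under an explicitly stated quantum-secure PRNG assumption (or, alternatively, a worst-case-over-circuits reading of the underlying theorem) is what actually licenses that step. In short: same route as the paper, but your version closes a gap the paper's sketch leaves open, and the extra assumption you isolate should be recorded alongside the theorem statement.
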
 

\begin{proof}[Proof sketch]
Suppose the candidate quantum-classical OWF by \cite{Behera18} is secure against a quantum polynomial adversary, and the classical-classical function from Algorithm \hyperlink{algo}{1} is not a quantum-resistant OWF. Then a quantum polynomial adversary $A$ can do the following steps to invert the candidate quantum-classical OWF: 

\begin{itemize}
    \item {\small $A$ knows the classical output and the circuit used. $A$ encodes the circuit in the same way that Algorithm \hyperlink{algo}{1} encodes the circuit.} 
    \item {\small $A$ now has a valid classical output for the classical-classical function implemented by Algorithm \hyperlink{algo}{1}. Since this function is assumed to be not a quantum-resistant OWF, $A$ is able to obtain a valid preimage with non-negligible probability. }
    \item {\small $A$ can then use the valid preimage to produce a GCH state that is a valid preimage for the candidate quantum-classical OWF, i.e. inverting it. This contradicts the assumption that the candidate quantum-classical OWF is secure. }
\end{itemize}
\end{proof}

\section{Discussion}
We have seen above that quantum-classical OWFs can give rise to quantum-resistant classical-classical OWFs, which can ultimately lead to important breakthroughs in proving the existence of private key cryptography and other computational complexity problems such as proving $\P \neq \NP$. \\

In Section 2, we attempted to produce a classical-classical OWF by composing a classical-quantum and a quantum-classical function, where the quantum-classical function is one-way. We also introduced the idea of a `no cheating' property for such quantum-classical OWFs. While a `no cheating' property is not required if this quantum-classical OWF is only going to be used in the composition of a classical-classical OWF, such a property could still have important cryptographic properties that could be useful in other contexts. We also currently do not have any potential candidates for such quantum-classical OWFs, so this would be a possible area of further research. \\

In Section 3, we showed that if the candidate quantum-classical OWF proposed by \cite{Behera18} is secure against a quantum polynomial adversary, then a quantum-resistant classical-classical OWF exists. The main gap with this result is showing that the candidate quantum-classical OWF is indeed secure. The proof presented by \cite{Behera18} only considers a few ways an adversary might perform an inversion, which mainly involves random guessing and a negligible success probability. However, this is far from a general proof of security, which requires proving the function to be secure against any efficient adversarial algorithm. One way to do this is to show that being able to invert the candidate quantum-classical OWF would violate a proven cryptographic theorem. Even if we only show that inverting the candidate quantum-classical OWF violates certain hardness assumptions (e.g. by reducing some hard problem like LWE to inverting the candidate quantum-classical OWF), it will be useful to better understand the difficulty of inversion. Therefore, we take this opportunity to highlight the need to further study the one-wayness property of functions that are produced in a manner similar to that of \cite{Behera18}, and determine whether such functions can indeed by secure against an arbitrary quantum polynomial adversary. 

\section{Future Work}
In this paper, we have shown that quantum-classical OWFs of varying definitions have the potential to give us quantum-resistant OWFs. We have also surfaced the following problems that are worthy of further research: 
\begin{itemize}
    \item Do classical-quantum OWFs imply classical-classical OWFs? 
    \item What other definitions of quantum-classical OWFs have potentially useful cryptographic applications?
    \item Does there exist a function satisfying Definition \ref{qcowf-def} or any other similar definitions (e.g. with varying definitions of `no cheating')? 
    \item Is the quantum-classical OWF proposed by \cite{Behera18} truly one-way against any quantum polynomial adversary? 
\end{itemize}
Resolving either of the last two questions would most likely give us a quantum-classical OWF and subsequently a quantum-resistant classical-classical OWF. 




\end{document}